\documentclass[aps,prb,10pt,twocolumn,superscriptaddress]{revtex4-2}

\usepackage{tikz,pgf}
\usepackage{amsfonts}
\usepackage{amsmath}
\usepackage{amssymb}
\usepackage{amsthm}
\usepackage{mathtools}
\usepackage{braket}
\usepackage[colorlinks=true,linkcolor=blue,citecolor=blue,urlcolor=blue]{hyperref}
\usepackage{standalone}
\usepackage{overpic}
\usepackage[right]{lineno}
\usepackage{algorithmicx,algpseudocode}
\usepackage[linesnumbered,ruled,vlined]{algorithm2e}
\usepackage{dcolumn}
\usepackage{listings}
\usepackage{subcaption}
\usepackage{caption}
\usepackage{enumitem}
\usepackage{bbm}
\usepackage{orcidlink}
\usepackage{graphicx}
\usepackage{xcolor}
\usepackage[T1]{fontenc}

\newcommand{\C}{\mathbb{C}}
\newcommand{\R}{\mathbb{R}}
\newcommand{\I}{\mathbbm{1}}

\newcommand{\ketbra}[1]{\ket{#1}\bra{#1}}
\renewcommand{\set}[1]{\left\{ #1 \right\}}

\graphicspath{}

\newtheorem{definition}{Definition}[section]
\newtheorem{theorem}{Theorem}[section]
\newtheorem{corr}{Corollary}[section]

\usetikzlibrary{arrows.meta,
                positioning,
                calc,
                shapes.geometric,
                math,
                3d}

\begin{document}

\title{A Short Note on the Generators of Controlled Quantum Gates}

\author{Richard M.~Milbradt \orcidlink{0000-0001-8630-9356}}
\email{r.milbradt@tum.de}
\affiliation{Technical University of Munich, CIT, Department of Computer Science, Boltzmannstra{\ss}e 3, 85748 Garching, Germany}
\author{Christian B.~Mendl~\orcidlink{0000-0002-6386-0230}}
\email{christian.mendl@tum.de}
\affiliation{Technical University of Munich, CIT, Department of Computer Science, Boltzmannstra{\ss}e 3, 85748 Garching, Germany}
\affiliation{Technical University of Munich, Institute for Advanced Study, Lichtenbergstra{\ss}e 2a, 85748 Garching, Germany}
\date{\today}

\begin{abstract}
We present the analytical generators for arbitrary multi-qubit controlled gates. Closed forms for the generating Hamiltonians are given for gates with both multiple control and target qubits, as well as for arbitrary control conditions. This allows us to go beyond gate-based simulations of quantum circuits and incorporate decoherence and other noise in simulations of quantum computers. We exemplify this by simulating the impact of a harmonic oscillator interacting with two qubits during the application of a controlled NOT gate.
\end{abstract}

\maketitle

We present some useful results for an alternative view of the quantum-gate picture of quantum computing. Usually, one considers the unitary operators on qubits $U$ as the building blocks of quantum circuits. Observe that for every unitary quantum gate $U$, there exists a \emph{generating Hamiltonian} or \emph{generator} $H[U]$ such that
\begin{equation}\label{eq:generator_def}
    U = e^{-iH[U]},
\end{equation}
where $H[U]$ is hermitian \cite{Reed1980}. By simulating a time evolution governed by $H[U]$ for some initial state $\ket{\psi}$ over the interval $[0,1]$, we can simulate the effect of $U$ applied to $\ket{\psi}$. Notably, this is closer to the physical implementation of quantum hardware; we can simply replace the Hamiltonian with the one that implements the given gate on the hardware. Additionally, it is easy to couple the gates to external non-qubit quantum systems to investigate realms beyond gate-based quantum computing. However, given an arbitrary $U$, it is generally not easy to find the corresponding generator. We will show that for a common class of multiqubit operators, the generators have an analytical expression.

\section{Generators of Quantum Gates}
\subsection{Single Qubit Gates}
We will start by restating some well-known generators of single-qubit quantum gates \cite{Nielsen2010, Muradian2005}. Consider the following identity
\begin{equation}\label{eq:involutory_identity}
    e^{i\theta U} = \cos(\theta) \I +i\sin(\theta) U
\end{equation}
for involutory $U$, i.e. $U^2=U$. This immediately implies
\begin{equation}
    U = e^{i\frac{\pi}{2}(\I - U)},
\end{equation}
or stated differently
\begin{equation}
    H[U]=\frac{\pi}{2}(\I - U)
\end{equation}
for involutory $U$. Involutory gates include the Pauli-Gates $X$, $Y$, and $Z$, as well as the Hadamard gate $\mathcal{H}$. The identity \eqref{eq:involutory_identity} also contains the generators of the rotation gates. By inserting the Pauli-Gates, we get
\begin{equation}
    R_U(\theta) = e^{-i \frac{\theta}{2} U} \text{ for } U \in \set{X, Y, Z}.
\end{equation}
This is by no means a coincidence, as the Pauli-Matrices are angular momentum operators and as such generate the Lie-group of rotations around the corresponding axes \cite{Zee2016}.\\

\begin{figure*}
    \begin{subfigure}{0.49\textwidth}
        \includegraphics{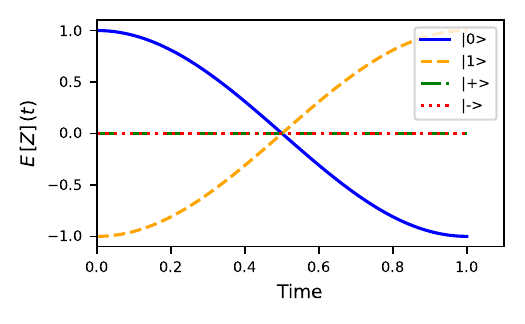}
        \caption{Pauli $X$ gate.}
        \label{fig:paulix_evolution}
    \end{subfigure}
    \begin{subfigure}{0.49\textwidth}
        \includegraphics{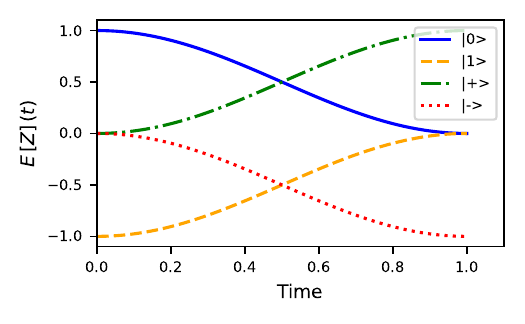}
        \caption{Hadamard gate $\mathcal{H}$.}
    \end{subfigure}
    \caption{The evolution of different initial states via single qubit gates via generator-based time-evolution.}
    \label{fig:single_gates_time_evo}
\end{figure*}

The time-evolution governed by a sample of single qubit gates, $X$ and $\mathcal{H}$, is shown in Fig.~\ref{fig:single_gates_time_evo}. We can see the expected behaviour for the different initial states. For example, consider the simulation according to the $X$ gate in Fig.~\ref{fig:paulix_evolution}. The two eigenstates of $X$, $\ket{+}$ and $\ket{-}$, do not change their expectation value with respect to $Z$, both being constant at $0$. On the other hand, the computational basis states $\ket{0}$ and $\ket{1}$ slowly change from one to the other. This is reflected by the expectation value slowly changing from $1$ to $-1$ for $\ket{0}$ and $-1$ to $1$ $\ket{1}$. In the plotted results for the Hadamard gate $\mathcal{H}$, we can clearly see the change of basis from the $X$-eigenbasis to the $Z$-eigenbasis and vice versa. However, quantum circuits consisting only of single-qubit gates do not increase a system's entanglement. Thus, they are easily simulatable using tensor network methods. To obtain less trivial circuits, we need multi-qubit gates.

\subsection{Controlled Gates}
\begin{figure*}
    \begin{subfigure}{0.49\textwidth}
        \includegraphics{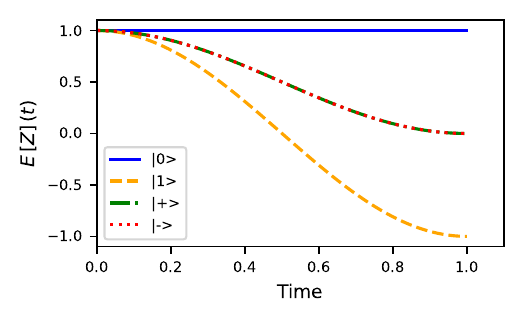}
        \caption{The expectation value $\braket{\psi | Z | \psi}$, where $Z$ is applied to one of the target qubits.}
    \end{subfigure}
    \begin{subfigure}{0.49\textwidth}
        \includegraphics{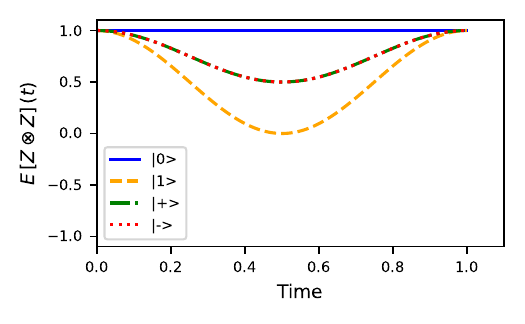}
        \caption{The expectation value $\braket{\psi | Z \otimes Z | \psi}$, where $Z\otimes Z$ is applied to both target qubits.}
    \end{subfigure}
    \caption{The expectation value of two different operators over time, when the time-evolution is governed by the $C_1XX$ generator. The states denoted in the legend are the states of the control qubit. The two target qubits are initialised as $\ket{00}$.}
    \label{fig:CXX_time_evo}
\end{figure*}

As the name suggests, multi-qubit gates act on more than one qubit. This interaction will generally increase the entanglement of the full quantum state and is required for universal quantum computing. A special class of multi-qubit gates for which we can find the generators analytically is the controlled gates.
\begin{definition}
    For a quantum gate $U \in \C^{2^L\times 2^L}$ and a control bitstring $\vec{b} \in \set{0,1}^{\times L_c}$ we define the controlled $U$ with $L_c$ control qubits as
    \begin{equation}
        C_{\vec{b}} U = \ket{\vec{b} \, }\bra{\vec{b} \, } \otimes U + \left( \I - \ket{\vec{b} \, }\bra{\vec{b} \, } \right) \otimes \I.
    \end{equation}
    Here, the left factor acts on the control qubits and the right factor acts on the target qubits.
\end{definition}

Before finding more general generators, we show the following identity:
\begin{theorem}\label{thm:single_control_generators}
    Given a gate $U$ with the generator $H[U]$, the generator of the controlled gate with respect to the bit $b \in \set{0,1}$ is
    \begin{equation}\label{eq:single_control_generator}
        H[C_b U] = \frac{1}{2} \left(\I + (-1)^b \cdot Z \right) \otimes H[U].
    \end{equation}
\end{theorem}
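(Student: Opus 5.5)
We identify the prefactor as a projector and exponentiate directly. The operator $P_b := \frac{1}{2}\left(\I + (-1)^b Z\right)$ equals $\ket{0}\bra{0}$ for $b=0$ and $\ket{1}\bra{1}$ for $b=1$; in both cases $P_b = \ket{b}\bra{b}$. It is a hermitian orthogonal projector with $P_b^2 = P_b$ and $P_b + P_{1-b} = \I$. The candidate generator is therefore $K := P_b \otimes H[U]$. It is hermitian, being a tensor product of hermitian operators, so it is an admissible generator in the sense of \eqref{eq:generator_def}.

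Next we compute $e^{-iK}$ from the power series. For $n\ge 1$, idempotence of $P_b$ gives $K^n = P_b^n \otimes H[U]^n = P_b \otimes H[U]^n$, while $K^0 = \I\otimes\I$. Summing the series,
\begin{equation*}
e^{-iK} = \I\otimes\I + P_b \otimes \sum_{n\ge1}\frac{(-i)^n}{n!}H[U]^n = \I\otimes\I + P_b\otimes\left(e^{-iH[U]} - \I\right).
\end{equation*}
Using $e^{-iH[U]}=U$ and $\I\otimes\I = P_b\otimes\I + (\I-P_b)\otimes\I$, this becomes
\begin{equation*}
e^{-iK} = P_b\otimes U + (\I - P_b)\otimes \I.
\end{equation*}
With $P_b = \ket{b}\bra{b}$, this is exactly the definition of $C_b U$ with $L_c=1$.

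An equivalent route, useful as a cross-check, is block-diagonalisation. In the computational basis of the control qubit, $K = \ket{b}\bra{b}\otimes H[U] + \ket{1-b}\bra{1-b}\otimes 0$. Exponentiating block-wise gives $e^{-iH[U]} = U$ on the $\ket{b}$ block and $e^{0} = \I$ on the other block.

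No step is a genuine obstacle. The only points needing care are the $n=0$ term of the series, which is why $P_b^n = P_b$ holds only for $n\ge1$, and checking that the sign convention $(-1)^b$ selects $\ket{b}\bra{b}$ rather than its complement. Since $Z\ket{b} = (-1)^b\ket{b}$, we have $P_b\ket{b} = \ket{b}$, which settles the sign.
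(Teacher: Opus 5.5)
Your proof is correct, and your main argument takes a different route from the paper's. The paper diagonalises $H[U] = V\Lambda V^\dagger$, conjugates by $\I\otimes V$, and exponentiates the diagonal matrix $0\oplus\Lambda$ entrywise to obtain $\ket{0}\bra{0}\otimes\I + \ket{1}\bra{1}\otimes e^{iH[U]}$. It treats $b=1$ explicitly and says $b=0$ is ``analogous''. You instead sum the exponential series using only $P_b^2=P_b$ and $e^{-iH[U]}=U$.

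Your route is purely algebraic. The exponential identity $e^{-iP\otimes H} = P\otimes e^{-iH} + (\I-P)\otimes\I$ holds for any idempotent $P$ and any operator $H$, so it needs neither the spectral theorem nor hermiticity. Hermiticity enters only, as you note, to make $K$ an admissible generator. Your argument also handles both values of $b$ at once. It uses the sign convention $U=e^{-iH[U]}$ consistently, whereas the paper's computation is written with $e^{+i\cdot}$; the identity holds either way, so that difference is cosmetic.

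A further payoff is that $\bigotimes_j P_{b_j} = \ket{\vec b}\bra{\vec b}$ is again idempotent. Your computation therefore gives the multi-control generator in one line, with no induction.

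Your block-diagonal cross-check is essentially the paper's argument. The paper's additional diagonalisation of $H[U]$ is not actually needed, since a block-diagonal operator exponentiates block by block. What the paper's version offers is mainly concreteness: the explicit form $e^{i0}\oplus e^{i\Lambda}$ shows directly that the $\ket{1-b}$ control block receives no phase.
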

\begin{proof}
    As a first step, note that $H[U]$ is hermitian and as such has an eigendecomposition
    \begin{equation}
        H[U] = V \Lambda V^{\dagger}
    \end{equation}
    with $V$ unitary and $\Lambda$ a real-valued diagonal matrix. From this \eqref{eq:single_control_generator} for $b=1$ follows by some algebra
    \begin{align}
        \exp & \left( i\frac{1}{2} (\I-Z) \otimes H[U] \right) \nonumber \\
        &= \exp \left( i (\I \, \ketbra{1} \, \I) \otimes (V \Lambda V^\dagger) \right) \nonumber \\
        & = (\I\otimes V) \exp \left(\ketbra{1} \otimes \Lambda \right) (\I \otimes V^\dagger) \nonumber \\
        & = (\I\otimes V) \exp \left(0 \oplus \Lambda \right) (\I \otimes V^\dagger) \nonumber \\
        & = (\I\otimes V) \left(e^{i0} \oplus e^{i\Lambda} \right) (\I \otimes V^\dagger) \nonumber \\
        & = (\I\otimes V) \left( \ketbra{0} \otimes \I + \ketbra{1} \otimes e^{i\Lambda} \right) (\I \otimes V^\dagger) \nonumber \\
        & = \ketbra{0} \otimes \I + \ketbra{1} \otimes e^{iH[U]} \nonumber \\
        & = CU. \nonumber
    \end{align}
    The case for $b=0$ is found analogously.
\end{proof}

\begin{corr}
    For an involutory $U$, we obtain
    \begin{equation}
        H[C_bU] = \frac{\pi}{4} (\I + (-1)^b Z ) \otimes (\I - U)
    \end{equation}
\end{corr}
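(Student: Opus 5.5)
The plan is to substitute the generator of an involutory gate into Theorem~\ref{thm:single_control_generators}. For involutory $U$ (meaning $U^2=\I$; the text's ``$U^2=U$'' is evidently a typo, since the identity \eqref{eq:involutory_identity} uses $U^2=\I$), the paper has already derived $H[U]=\frac{\pi}{2}(\I-U)$. Inserting this into \eqref{eq:single_control_generator} gives
\begin{equation*}
H[C_bU] = \frac{1}{2}\left(\I+(-1)^b Z\right)\otimes \frac{\pi}{2}(\I-U) = \frac{\pi}{4}\left(\I+(-1)^b Z\right)\otimes(\I-U),
\end{equation*}
because the scalar factors combine across the tensor product.

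The only step that needs care is checking that $H[U]=\frac{\pi}{2}(\I-U)$ really satisfies the hypotheses of Theorem~\ref{thm:single_control_generators}. That theorem requires a hermitian generator with $U=e^{-iH[U]}$, so two things must be confirmed.

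\emph{Hermiticity.} An involutory gate that is also unitary satisfies $U^\dagger=U^{-1}=U$, so $U$ is hermitian and hence so is $\frac{\pi}{2}(\I-U)$.

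\emph{Sign convention.} Since $\I$ commutes with $U$, we can factor the exponential and apply \eqref{eq:involutory_identity} with $\theta=\pi/2$:
\begin{equation*}
e^{-i\frac{\pi}{2}(\I-U)} = e^{-i\pi/2}\,e^{i\frac{\pi}{2}U} = (-i)(iU) = U.
\end{equation*}
This confirms $U=e^{-iH[U]}$ under the convention \eqref{eq:generator_def}. I note that the theorem's proof as written exponentiates with $+i$; for involutory $U$ the sign does not matter, because $U=U^{-1}$, so either convention yields the same gate.

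With both hypotheses verified, the corollary follows immediately from Theorem~\ref{thm:single_control_generators}. As a sanity check, one can expand $\frac{1}{2}(\I+(-1)^bZ)$ as the projector $\ketbra{b}$. Exponentiating then gives $\ketbra{b}\otimes U+\ketbra{1-b}\otimes\I$, which matches the definition of $C_bU$.
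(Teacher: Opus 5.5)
Your proposal is correct and matches the paper's route: the paper gives no separate proof and treats the corollary as the immediate substitution of $H[U]=\frac{\pi}{2}(\I-U)$ into Theorem~\ref{thm:single_control_generators}, exactly as you do. Your additional checks (hermiticity via $U^\dagger=U^{-1}=U$, the $U^2=\I$ typo, the sign-convention independence for involutory $U$, and the projector sanity check) are all sound.
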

Thus, for the most ubiquitous two-qubit gate, we find
\begin{equation}\label{eq:cnot_generator}
    H[C_1X] = \frac{\pi}{4} (\I - Z ) \otimes (\I - X)
\end{equation}
as the generating Hamiltonian. We can easily extend our above findings to general controlled gates.

\begin{theorem}
    Given a quantum gate $U$ and a bitstring $\vec{b}$, the generator of the controlled gate is given by
    \begin{equation}\label{eq:multi_control_generator}
        H\left[C_{\vec{b}}U\right] = \left(\bigotimes_{j=1}^{n-1}\frac{1}{2}(\I + (-1)^{b_j} Z )\right) \otimes H[U].
    \end{equation}
\end{theorem}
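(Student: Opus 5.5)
The plan is to reduce to the same spectral argument used in the proof of Theorem~\ref{thm:single_control_generators}, with the single-qubit projector replaced by a multi-qubit one. The first step is to identify the control factor in \eqref{eq:multi_control_generator} as a projector. Each factor $\frac{1}{2}(\I + (-1)^{b_j} Z)$ equals $\ketbra{b_j}$: it is $\ketbra{0}$ for $b_j=0$ and $\ketbra{1}$ for $b_j=1$. Hence the tensor product over all control qubits (the index $j$ running over the $L_c$ control qubits) is
\begin{equation}
    P := \bigotimes_{j} \frac{1}{2}\left(\I + (-1)^{b_j} Z\right) = \ket{\vec{b}\,}\bra{\vec{b}\,},
\end{equation}
an orthogonal projector of rank one on the control register.

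The second step is to exponentiate $P \otimes H[U]$. Since $P^2 = P$, induction gives $(P\otimes H[U])^k = P \otimes H[U]^k$ for every $k\ge 1$. Summing the power series, with the $k=0$ term $\I\otimes\I$ split as $P\otimes \I + (\I-P)\otimes\I$, yields
\begin{equation}
    e^{-iP\otimes H[U]} = (\I-P)\otimes \I + P\otimes e^{-iH[U]} = (\I-P)\otimes\I + P\otimes U,
\end{equation}
which is exactly $C_{\vec{b}}U$ by the Definition. Hermiticity of the proposed generator is immediate, because it is a tensor product of Hermitian operators.

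Alternatively, one can mirror the paper's proof directly: diagonalize $H[U]=V\Lambda V^\dagger$, conjugate by $\I\otimes V$, and observe that $\ket{\vec{b}\,}\bra{\vec{b}\,}\otimes\Lambda$ is diagonal in the computational-times-eigenbasis. Its exponential is then the identity on every block except the $\vec{b}$ block, where it is $e^{-i\Lambda}$.

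There is no real obstacle in this argument. The only points needing care are bookkeeping ones. First, the factor identification must hold for both values of $b_j$. Second, the index range in \eqref{eq:multi_control_generator} should be read as running over the control qubits. Third, the sign convention in \eqref{eq:generator_def} must be kept consistent, whereas Theorem~\ref{thm:single_control_generators}'s computation used $e^{+i\cdot}$. The projector-power identity makes the proof essentially one line once $P$ has been recognized.
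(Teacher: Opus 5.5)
Your proof is correct, but it does not follow the paper's route. The paper argues by induction on the number of control qubits. Writing $\vec{b}=(b_1,\vec{b}')$, one has $C_{\vec{b}}U = C_{b_1}(C_{\vec{b}'}U)$. Theorem~\ref{thm:single_control_generators} is then applied to the multi-qubit gate $C_{\vec{b}'}U$ with its inductively known generator $\ket{\vec{b}'}\bra{\vec{b}'}\otimes H[U]$, which prepends one more factor $\frac{1}{2}(\I+(-1)^{b_1}Z)$.

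You instead treat all controls at once. You recognize the full control factor as the rank-one projector $\ket{\vec{b}\,}\bra{\vec{b}\,}$ and exponentiate using idempotence. Your spectral alternative is the same idea: it is Theorem~\ref{thm:single_control_generators}'s computation with $\ketbra{1}$ replaced by $\ket{\vec{b}\,}\bra{\vec{b}\,}$, also with no induction.

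What your route gains:
\begin{itemize}
\item It is self-contained. It does not need the nesting identity $C_{b_1}(C_{\vec{b}'}U)=C_{\vec{b}}U$, which the paper's one-line induction uses without stating.
\item It does not rely on Theorem~\ref{thm:single_control_generators}, whose written computation uses $e^{+i\cdot}$ against the convention \eqref{eq:generator_def}, as you noticed.
\item The power-series version needs no diagonalization at all.
\item It proves more. The identity $e^{-iP\otimes A}=(\I-P)\otimes\I+P\otimes e^{-iA}$ holds for any idempotent $P$ and any square $A$. It therefore covers, verbatim, control on an arbitrary subspace, such as several accepted bitstrings or controls in another basis.
\end{itemize}

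What the paper's route gains: it is shorter once Theorem~\ref{thm:single_control_generators} is available. It also makes explicit the compositional picture that each added control qubit contributes one more tensor factor $\frac{1}{2}(\I\pm Z)$.
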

\begin{proof}
    One can utilise the results from Thm.~\ref{thm:single_control_generators} and obtain the identity \eqref{eq:multi_control_generator} via induction.
\end{proof}
Another class of operators that we have not yet considered are controlled gates with multiple targets. That is, the target unitary consists of multiple smaller unitaries acting on pairwise disjoint supports
\begin{equation}
    U = \bigotimes_k U_k,
\end{equation}
where the set $k$ denotes the qubits on which $U_k$ acts non-trivially, i.e. the support of $U_k$. For such a gate~\cite{Neudecker1969}
\begin{equation}\label{eq:multi_target_generator}
    \bigotimes_k U_k = \bigotimes_k e^{iH_k} = e^{i\sum_k H_k},
\end{equation}
where we shortened $H\left[U_k\right]$ to $H_k$. This allows us to deduce the form of the generator of the most general kind of controlled qubit gates.
\begin{theorem}
    If $\vec{b}$ is a bitstring and $\set{U_k}_k$ quantum gates with pairwise disjoint qubit supports $k$, then
    \begin{widetext}
    \begin{equation}\label{eq:general_controlled_gate_generator}
        H\left[\left(\bigotimes_{k} C_{\vec{b}} U^{[k]}\right)\right] = H\left[C_{\vec{b}}\left(\bigotimes_{k}U^{[k]}\right)\right] = \left(\bigotimes_{j=1}^{n-1}\frac{1}{2}(\I + (-1)^{b_j} Z )\right) \otimes \left( \sum_k H\left[U^{[k]}\right] \right).
    \end{equation}
    \end{widetext}
\end{theorem}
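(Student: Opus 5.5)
The plan is to prove the second equality first, since it follows by combining the multi-control theorem with the multi-target identity \eqref{eq:multi_target_generator}. Then I will justify the first equality, which I read as a statement about the product of controlled gates sharing the same control register. For the second equality, set $U=\bigotimes_k U^{[k]}$. Each $U^{[k]}$ has a generator $H_k = H[U^{[k]}]$, and the $U^{[k]}$ have pairwise disjoint supports. Interpreting each $H_k$ as $H_k\otimes\I$ on the remaining target qubits, the $H_k$ pairwise commute. Hence \eqref{eq:multi_target_generator} gives $U=e^{-i\sum_k H_k}$, keeping the sign convention of \eqref{eq:generator_def}. So $\sum_k H_k$ is a hermitian generator of $U$, and we may take $H[U]=\sum_k H_k$. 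Inserting this into \eqref{eq:multi_control_generator} yields the right-hand side of \eqref{eq:general_controlled_gate_generator} immediately.

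For the first equality, I read $\bigotimes_k C_{\vec b}U^{[k]}$ as the product of the gates $C_{\vec b}U^{[k]}$. All of them act on the common control register, and each acts on its own target support. I will show that this product equals $C_{\vec b}\bigl(\bigotimes_k U^{[k]}\bigr)$. Write $P=\ket{\vec b}\bra{\vec b}$ and $C_{\vec b}U^{[k]} = P\otimes U^{[k]} + (\I-P)\otimes\I$. Multiplying two such factors uses $P^2=P$, $P(\I-P)=0$ and $(\I-P)^2=\I-P$. This gives $P\otimes U^{[k]}U^{[l]} + (\I-P)\otimes\I$, and on disjoint supports $U^{[k]}U^{[l]}=U^{[k]}\otimes U^{[l]}$. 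Induction over $k$ then gives the claim.

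As a cross-check, the same conclusion follows at the level of generators. By \eqref{eq:multi_control_generator} each factor has generator $\Pi\otimes H_k$, where $\Pi=\bigotimes_j \tfrac12(\I+(-1)^{b_j}Z)$. These generators commute, because $\Pi$ is a projector and the $H_k$ have disjoint supports. So the exponent of the product is $\sum_k \Pi\otimes H_k=\Pi\otimes\sum_k H_k$.

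The main obstacle is not computational but interpretive and notational. One point is the meaning of ``$\bigotimes_k$'' for gates sharing controls. The other is that generators are not unique, so ``the'' generator must be understood as a choice consistent with the chosen $H[U^{[k]}]$. I will state explicitly that each $H_k$ is embedded by tensoring with identities, and that commutativity is what licenses adding exponents. With these conventions fixed, the argument reduces to the two displayed identities above.
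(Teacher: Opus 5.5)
Your proposal is correct and follows the paper's route. You obtain the second equality exactly as the paper does, by inserting the multi-target identity \eqref{eq:multi_target_generator} into \eqref{eq:multi_control_generator}; for the first equality the paper only points to the analogous argument for controlled phase gates in \cite{Chen2023}, while your projector computation (using $P^2=P$ and $P(\I-P)=0$, with $\bigotimes_k$ read as a product over a shared control register) supplies that step explicitly.
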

\begin{proof}
    The first equality can be obtained in the same way as for the single-controlled phase gates in \cite{Chen2023}. Then the second equality follows directly from inserting \eqref{eq:multi_target_generator} into \eqref{eq:multi_control_generator}.
\end{proof}
In turn \eqref{eq:general_controlled_gate_generator} in the case of $U^{[k]}$ being involutory gates immediately implies
\begin{corr}
    If $\vec{b}$ is a bitstring and $\set{U_k}_k$ involutary quantum gates with pairwise disjoint qubit supports $k$, then
    \begin{widetext}
    \begin{equation}
        H\left[C_{\vec{b}}\left(\bigotimes_{k}U^{[k]}\right)\right] = \left(\bigotimes_{j=1}^{n-1}\frac{1}{2}(\I + (-1)^{b_j} Z ) \right) \otimes \left( \sum_k \frac{\pi}{2} (\I - U^{[k]}) \right).
    \end{equation}
    \end{widetext}
\end{corr}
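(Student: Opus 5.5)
The corollary follows by substitution into the general controlled-gate theorem \eqref{eq:general_controlled_gate_generator}. That theorem already expresses $H[C_{\vec{b}}(\bigotimes_k U^{[k]})]$ as the projector $\bigotimes_j \frac12(\I + (-1)^{b_j} Z)$ tensored with $\sum_k H[U^{[k]}]$. So the only remaining task is to identify each $H[U^{[k]}]$ when $U^{[k]}$ is involutory. For that I would invoke the single-qubit result stated at the start of the paper, $H[U] = \frac{\pi}{2}(\I - U)$ for involutory $U$, applied factor by factor.

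I need the correct notion of involutory here. The paper writes $U^2 = U$, but the intended condition, consistent with the examples $X, Y, Z, \mathcal{H}$, is $U^2 = \I$. I would verify this directly.

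First, $(\I - U)/2$ is then a projector $P$: indeed $P^2 = (\I - 2U + U^2)/4 = (\I - U)/2$. It is Hermitian when $U$ is unitary and involutory, since $U = U^{-1} = U^\dagger$.

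Second, $e^{-i\pi P} = \I + (e^{-i\pi} - 1)P = \I - 2P = U$.

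Note that with the sign convention \eqref{eq:generator_def}, $U = e^{-iH[U]}$, this gives $H[U] = \frac{\pi}{2}(\I - U)$. Any sign ambiguity is harmless, because $e^{\pm i\pi P}$ coincide ($e^{i\pi} = e^{-i\pi}$). This argument works for any multi-qubit involutory $U^{[k]}$, not only single-qubit ones, so it covers arbitrary supports $k$.

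Finally, I would substitute $H[U^{[k]}] = \frac{\pi}{2}(\I - U^{[k]})$ into the sum $\sum_k H[U^{[k]}]$ in \eqref{eq:general_controlled_gate_generator}. Each term is understood as acting on its support $k$ and padded by identities on the other target qubits. This yields exactly the claimed expression.

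The only step requiring any care is justifying the involutory generator formula under the correct definition $U^2 = \I$ and the paper's sign convention. Everything else is direct substitution, since the hard work of combining the disjoint-support generators and attaching the control projector is already done by the preceding theorem.
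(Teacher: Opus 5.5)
Your proposal is correct and takes the same route as the paper: substitute the involutory generator $H[U]=\frac{\pi}{2}(\I-U)$ into the general controlled-gate theorem. Your explicit check that $\frac12(\I-U)$ is a projector under the intended condition $U^2=\I$ (rather than the typo $U^2=U$), and your remark that the sign convention does not matter, tighten the argument without changing it.
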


As an example of a time-evolution governed by a multi-qubit gate, consider the application of the $C_1XX$-gate via time-evolution in Fig.~\ref{fig:CXX_time_evo}. We can see that both qubits are flipped over time, if the control is $\ket{1}$, while nothing happens if it is $\ket{0}$, as desired. Additionally, if the control bit is one of the Pauli $X$ eigenstates, the target qubits are entangled with it after the time evolution is over. Now we want to demonstrate how the generator-based simulation allows us to go beyond the quantum circuit framework.

\section{Disturbed CNOT-Gate}
\begin{figure*}
    \centering
    \begin{subfigure}{0.49\textwidth}
        \centering
        \includegraphics{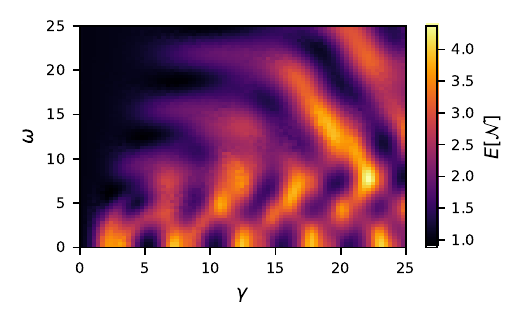}
        \caption{The number operator expectation value $E[\mathcal{N}]$.}
        \label{fig:num_op}
    \end{subfigure}
    \begin{subfigure}{0.49\textwidth}
        \centering
        \includegraphics{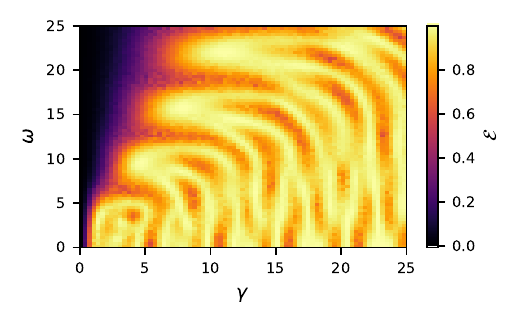}
        \caption{The error $\mathcal{E}$ compared to the perfect reference gate.}
        \label{fig:error}
    \end{subfigure}
    \caption{Results of the disturbed $C_1X$ simulation.}
\end{figure*}

Using the generator framework, we can simulate the impact of an external environment coupled to qubits. As a toy example, we consider a two-qubit system on which we run the $C_1X$ gate via the generator in \eqref{eq:cnot_generator}. The qubits are coupled to a harmonic oscillator such that the full Hamiltonian is
\begin{equation}
    H = H[C_1 X] +  \gamma X \otimes X \otimes (a + a^\dagger) + \omega \mathcal{N},
\end{equation}
where $a$ and $a^\dagger$ are the annihilation and creation operators, $\mathcal{N} = a a^\dagger$ the number operator, and $X \otimes X$ acts on the two qubits. The coupling strength $\gamma \in \R$ and characteristic frequency $\omega \in \R$ are parameters which we vary to explore the effect of the coupled oscillator. We run multiple simulations for each parameter pair, with the two-qubit initial state $\ket{\psi_0}$ chosen Haar-randomly each time. The oscillator is initially uncorrelated with the qubits and starts in the singly excited state $\ket{1}$ such that the full initial state is
\begin{equation}
    \ket{\Psi_0} = \ket{\psi_0} \otimes \ket{1}.
\end{equation}
We record the final number of excitations of the oscillator
\begin{equation}
    E [\mathcal{N}] = \braket{\Psi_f | \mathcal{N} | \Psi_f},
\end{equation}
where $\ket{\Psi_f}$ is the full system's final state. In addition, we record the error of the final state compared to the undisturbed $C_1X$ gate as
\begin{equation}
    \mathcal{E} = 1 - |\braket{\Psi_0 | C_1X | \Psi_f}|^2.
\end{equation}
The expectation value and error are recorded in Fig.~\ref{fig:num_op} and Fig.~\ref{fig:error}, respectively.

The first observation confirms intuition: stronger coupling leads to more excitations and a higher error. However, this is not a trivial relationship. Instead, for specific values of the coupling strength, the number of excitations has a global maximum. This is akin to a resonance behaviour of an oscillator at specific coupling values. While the two plots are qualitatively similar, they are only mildly correlated, with a Pearson product-moment correlation coefficient of approximately $0.653$. This is related to the specific form of the coupling term in $H$, where both excitation creation and annihilation are possible. Thus, an interaction can occur, changing the qubits' state and creating an excitation that leads to an error. Then, at a later point, the excitation is annihilated, giving the impression that no interaction happened when considering only the number operator. Accordingly, the error has a less perfect resonant behaviour and is overall more chaotic. However, we observe that the higher the characteristic frequency, the stronger the coupling must be to cause a significant error.

\section{Discussion}
This work mainly extends \cite{Muradian2005} to additional gates and is intended as a quick reference for future work on simulating quantum circuits via generators. These generators are easy to combine with tensor networks because they have an efficient matrix-product operator form \cite{cCakir2025}, which allows simulation of larger controlled gates using only local information \cite{Sander2025}. While we only used a toy model disturbance to the gate in question, it is straightforward to include the generator of a quantum gate in the system Hamiltonian for more complex open-quantum-system methods \cite{Breuer2002}, such as the hierarchical equations of motion \cite{Mangaud2023, Ke2023}. 

\begin{acknowledgments}
The research is part of the Munich Quantum Valley, which is supported by the Bavarian state government with funds from the Hightech Agenda Bayern Plus. The scripts to run the simulations and plots are available at \cite{ScriptGithub}.
\end{acknowledgments}

\bibliography{references}

\end{document}